\newcommand{\lcell}{\lceil}
\newcommand{\rcell}{\rceil}
\newcommand{\alphabetdf}{\mathfrak{A}_k}
\newcommand{\alphabet}{\mathfrak{A}_{2k}}
\newcommand{\obietnic}{2}
\newcommand{\eop}{\qed}
\author{Jerzy Marcinkowski, Jakub Michaliszyn}
\institute{Institute of Computer Science,\\ University Of Wroclaw,\\ ul. Joliot-Curie 15, 50-383 Wroclaw, Poland}
\title{The cost of being co-B\"uchi is nonlinear\thanks{Research supported by Polish Ministry of Science and Higher Education research project N206 022 
31/3660, 2006/2009.}}
\begin{document}
\maketitle
\pagestyle{headings}
 
\begin{abstract}
It is well known, and easy to see, that not each nondeterministic B\"uchi automaton on infinite words can be simulated by a  nondeterministic co-B\"uchi automaton. We show that in the cases when such a simulation is possible, the number of states needed for it can grow nonlinearly. More precisely, we show a sequence of -- as we believe, simple and elegant -- languages which witness the existence
of a  nondeterministic B\"uchi automaton with $n$ states, which can be simulated by a  nondeterministic co-B\"uchi automaton, but cannot be simulated by any  nondeterministic co-B\"uchi automaton with less than $c * n^{7/6}$ states for some constant $c$. This improves on the best previously known lower bound of $3(n-1)/2$.\footnote{Shortly before submitting this paper, we learned that a paper {\em Co-ing B\"uchi: Less Open, Much More Practical} by Udi Boker and Orna Kupferman was accepted for LICS 2009, and that it probably contains results that are similar to ours and slightly stronger. However, at the moment of our submission,
their paper was not published, and as far as we know, was not available on the Web.}
\end{abstract}
 
\section{Introduction}
\subsection{Previous work}
In 1962 B\"uchi was the first to introduce finite automata on infinite words. He needed them  to solve some fundamental decision problems in mathematics and logic (\cite{B62}, \cite{M66} \cite{R69}).  They  became a popular area of research due to their elegance and the tight relation between automata on infinite objects and monadic second-order logic. Nowadays, automata are seen as a very useful tool in verification and specification of nonterminating systems. This is why the complexity of problems concerning automata has recently been considered a hot topic (e. g. \cite{kv97}, \cite{AK08}).
 
To serve different applications, different types of automata were introduced. In his proof of the decidability of the satisfiability of S1S, B\"uchi introduced nondeterministic automata on infinite words (NBW), which are a natural tool to model things that happen infinitely often. In a B\"uchi automaton, some of the states are  {\it accepting} and a run on an infinite word is accepting if and only if  it visits some accepting state infinitely often (\cite{B62}). Dually, a run of co-B\"uchi automaton (NCW) is accepting if and only if it visits non-accepting states only finitely often. There are also automata with more complicated accepting conditions --  most well known of them are parity automata, Street automata, Rabin automata and Muller automata. 
 
As in the case of finite automata on finite words,  four basic types of transition relation can be considered: {\it deterministic, nondeterministic, universal} and {\it alternating}. {\bf In this paper, from now on, we only consider nondeterministic automata}. 
 
The problem of comparing the power of different types of automata is well studied and understood. For example it is easy to see that not every language that can be recognized by a  B\"uchi automaton on infinite words (such languages are called $\omega$-regular languages) can be also recognized by a  co-B\"uchi automaton. The most popular example of $\omega$-regular language that cannot be expressed by NCW is the language $L=\{w | w$ has infinitely many 0's$\}$ over the alphabet $\{0, 1\}$. On the other hand, it is not very hard to see that every language that can be recognized by a co-B\"uchi automaton is $\omega$-regular.
 
As we said, the problem of comparing the power of different types of automata is well studied. But we are quite far from knowing everything about the number of states needed to simulate an automaton of one type by an automaton of another type -- see for example the survey \cite{K07} to learn about the open problems in this area. 
 
In this paper we consider the  problem of the cost of simulating a B\"uchi automaton on infinite words by a co-B\"uchi automaton (if such NCW exists), left open in \cite{K07}:  given a number $n\in {\cal N}$,  for what $f(n)$ can we be sure that   every nondeterministic B\"uchi automaton with no more than $n$ states, which can be simulated by a co-B\"uchi automaton, can be  simulated by a co-B\"uchi automaton with at most $f(n)$ states?
 
There is a large gap between the known upper bound and the known lower bound for such a translation. The best currently known translation goes via intermediate deterministic Street automaton, involving exponential blowup of the number of states (\cite{S88}, \cite{AK08}). More precisely, for NBW with $n$ states, we get NCW with $2^O(n \log n)$ states. For a long time the best known lower bound for $f$  was nothing more than the  trivial bound $n$. In 2007 it was shown that there is an NBW with equivalent NCW such that there is no NCW equivalent to this NBW on the same structure (\cite{KMM04}). The first non-trivial (and the best currently known) lower bound is linear -- the result of  \cite{AK08} is  that,  for each $n\in {\cal N}$, there  exists a NBW with $n$ states such that there is a $NCW$ which recognizes the same language, but every such  $NCW$ has at least $3(n - 1)/2$ states.
 
There is a good  reason why it is  hard to show a lower bound for the above problem. The language (or rather, to be more precise, class of languages) used to show such a bound has to be hard enough to be expressed by a co-B\"uchi automaton, but on the other hand not too hard, because some (actually, most of) $\omega$-regular languages cannot be expressed by NCW at all. The idea given in the proof of $3(n - 1)/2$ lower bound in \cite{AK08}, was to define a language which can be easily split into parts that can by recognized by a NBW but cannot by recognized by a NCW. The language they used was $L_k = \{w \in \{0, 1\}^{\omega} | $ both $0$ and $1$ appear at least $k$ times in $w\}$. Let $L_k^i = \{w \in \{0, 1\}^{\omega} | i$ appears infinitely often in $w$ and $(1-i)$ appear at least $k$ times in $w\}$, then it is easy to see that $L_k = L_k^0 \cup L_k^1$, $L_k^i$ can be recognized by B\"uchi automata of size  $k$ and $L_k^i$ cannot by recognized by any co-B\"uchi automaton. It is still, however, possible to built a NCW that recognizes $L_k$ with $3k+1$ states and indeed, as it was proved in \cite{AK08},  every NCW  recognizing $L_k$ has at least $3k$ states.
 
\subsection{Our contribution -- a nonlinear lower bound}
In this paper we give a strong improvement of the lower bound from \cite{AK08}. We show that, for every integer $k$, there is a language $L_k$ such that $L_k$ can be recognized by an NBW with $\Theta(k^2)$ states, whereas every NCW that recognizes this language has at least $\Theta(k^{7/3})$ states. Actually, the smallest NCW we know, which recognizes $L_k$, has  $\Theta(k^{3})$ states, and we believe that this automaton is indeed minimal.  In the terms of function $f$ from the above subsection this means that $f$ equals at least $cn^{7/6}$ for some $c$ (this is since $n$ is $\Theta(k^2)$) and, if our conjecture concerning the size of a minimal automaton for $L_k$ is true,    $f$ would equal at least $cn^{3/2}$ for some constant $c$.
 
The technical part of of this paper is organized as follows. In subsection \ref{preliminaria} we give some basic definitions. In subsection \ref{Lk}  the definition of the language $L_k$ is presented. Also in this section  we show how this language can be  recognized by a  B\"uchi automaton with $O(k^2)$ states and how  $L_k$ can be recognized by a co-B\"uchi automaton with $O(k^3)$ states. The main theorem, saying that 
every co-B\"uchi automaton that recognizes $L_k$ has at least $\Theta(k^{7/3})$  states is formulated in the end of subsection \ref{Lk} and the rest of the paper is devoted to its proof.

\section{Technical Part}
\subsection{Preliminaries}\label{preliminaria}
A \emph{nondeterministic $\omega$-automaton} is a quintuple $\langle \Sigma, Q, q_0, \delta , \alpha\rangle$, where $\Sigma$ is an alphabet, $Q$ is a set of states, $q_0\in Q$ is an initial state, $\delta \subseteq Q \times \Sigma \times Q$ is a transition relation and $\alpha \subseteq Q$ is an accepting condition.
 
A \emph{run} of $\omega$-automaton over a word $w=w_1w_2\dots$ is a sequence of states $q_0q_1q_2\dots$ such that for every $i \geq 0$, $q_i\in Q$ and $\langle q_i, w_{i+1}, q_{i+1}\rangle \in \delta$. 
 
Depending on type of the automaton we  we have different definitions of \emph{accepting run}. For a B\"uchi automaton, a run is accepting, if it visits some state from accepting condition $\alpha$ infinitely often. In the case of a co-B\"uchi automaton, a run is accepting, if only states from the set $\alpha$  are visited infinitely often in this run. For a given nondeterministic $\omega$-automaton $\cal{A}$ and a given word $w$, we say that $\cal{A}$ \emph{accepts}   $w$ if there exists an accepting run of $\cal{A}$ on $w$. The words accepted by $\cal{A}$ form the language of $\cal{A}$, denoted by $L(\cal{A})$.
 
We say that a co-B\"uchi automaton ${\cal{A}}=\langle \Sigma , Q, q_0, \delta , \alpha\rangle$ is in the {\bf normal form} iff for each $\langle q, a, q'\rangle \in \delta$ if $q$ is in $\alpha$, then also $q'$ is in $\alpha$. Note that for a given NCW 
${\cal{A}}=\langle \Sigma, Q, q_0, \delta , \alpha\rangle$
the automaton ${\cal{A'}}=\langle \Sigma, Q', \langle q_0, 0\rangle, \delta', \alpha\times\{1\}\rangle$, where 
$Q'=Q\times\{0\} \cup \alpha \times \{1\}$ and 
$\delta'=\{\langle \langle q, i\rangle, a, \langle q', j \rangle\rangle \; | \; \langle q, a, q'\rangle \in \delta \wedge i \leq j 
\wedge  \langle q, i\rangle,\langle q', j \rangle\in Q' \}$ 
is in the normal form, recognizes the same language and has at most $2|Q|$ states.
 
For a given word $w=w_1, w_2, \dots$, let $w[i, j]=w_i, w_{i+1}, \dots, w_j$ and let $w[i, \infty]=w_i, w_{i+1}, \dots$. 
 
An accepting run $q_0, q_1, \dots$ on a word $w=w_1, w_2, \dots$ of a co-Buchi automaton in the normal form is called \emph{shortest}, if it reaches
an accepting state as early as possible, that is if for each accepting run $p_0, p_1, \ldots$ of this automaton on $w$ it holds that 
if $p_i\in \alpha$ then also $q_i \in \alpha$.
 
%for every $i$, $j$ such that $w[i \infty] = w[j, \infty]$, $q_{i-1} \not \in \alpha$ and $q_{i-1} = q_{j-1}$ we have $q_i=q_j$. It is easy to see that if co-B\"uchi has an accepting run, then it has the shortest accepting run.
 
\subsection{Languages $L_k$ and their automata}\label{Lk}
%%%%%%%%%%%%%%%%%%%%%%%%%%%%%%%%%%%%%%%%%%%%%%%%%%%%%%%%%%%%%%%%%%%%%%%%%%%%%%%%%%%%%%%%%%%
Let $k \geq 64 $ be a -- fixed -- natural number and let $\alphabetdf=\{1,2\ldots k\}$. 
%For better readability let us assume that $k$ is even. 
The set $\Sigma_k=\{a,\bar{a} \;| \: a\in \alphabet\}$ will
be the alphabet of our language $L_k$. 
 
Let us begin with the following informal interpretation of words 
over  $\Sigma_k$. Each symbol $j\in \Sigma_k$ should be read as ``agent $j$ makes a promise''.
Each symbol $\bar{j}\in \Sigma_k$ should be read as ``$j$ fulfills his  promise''. The language $L_k$ consists (roughly speaking)
of the words in which there is someone who at least $\obietnic k$ times fulfilled  his promises, but there are also
promises which were never fulfilled.
 
To be more formal: 
 
\begin{definition}
For a word $w\in \Sigma_k^\omega$, where $w=w_1w_2\ldots $ and $i\in \cal N$, define the interpretation $h_i(w)$ as:
 
\begin{itemize}
\item $h_i(w) = \sharp$ ~if $w_i \in \alphabet$ and $\bar{w_i}$ occurs in $w[i+1,\infty]$; (it is the fulfillment that counts, not a promise).
 
\item $h_i(w) = 0 $ if $w_i \in \alphabet$ and $\bar{w_i}$ does not occur in $w[i+1,\infty]$; (unfulfilled promises are read as 0).
 
\item Suppose $w_i = \bar{s}$ for some $s\in \alphabet$. Then  $h_i(w)=s$ if there is $j<i$ such that $w_j=s$ and $\bar{s}$ does not occur in the word
$w[j,i-1]$, and  $h_i(w)=\sharp$ if there is no such $j$ (one first needs to make a promise, in order to fulfill it).
 
\end{itemize}
 
The interpretation $h(w)$ is now defined as the infinite word $h_1(w)h_2(w)\ldots $.
\end{definition}
 
Now we are ready to formally define the language  $L_k$:
 
\begin{definition}
$L_k$ is the set of such words $w\in \Sigma_k^\omega$ that:
 
\begin{itemize}
\item
either there is at least one $0$ in $h(w)$ and there exists $s\in \alphabet$ which occurs at least $\obietnic k$ times in $h(w)$,
\item
or there exists $i$ such that $h_j(w)=\sharp$ for all $j>i$. 
\end{itemize}
\end{definition}

 \begin{figure}
  \begin{center}
  \includegraphics[width = \textwidth]{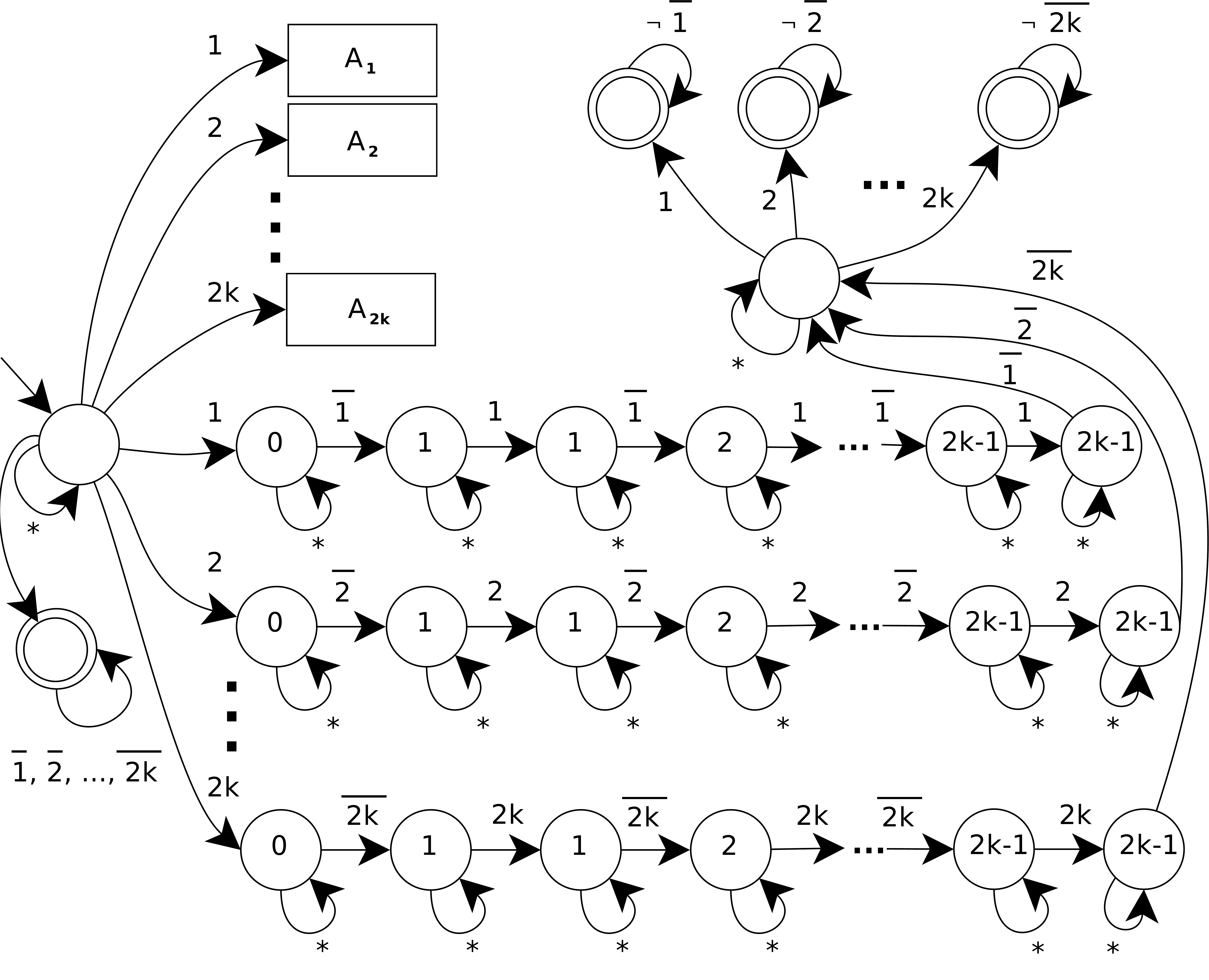}
  \end{center}
  \caption{The $\omega$-automaton recognizing $L_k$ -- all differences between NBW version and NCW version are in the body of $A_i$ (fig. \ref{f-ai}). Label $\neg i$ stands (for better readability) for alternative of every label except $i$.}
 \label{f-NBW}
 \end{figure}
 \begin{figure}
  \begin{center}
  \includegraphics[width = 270pt]{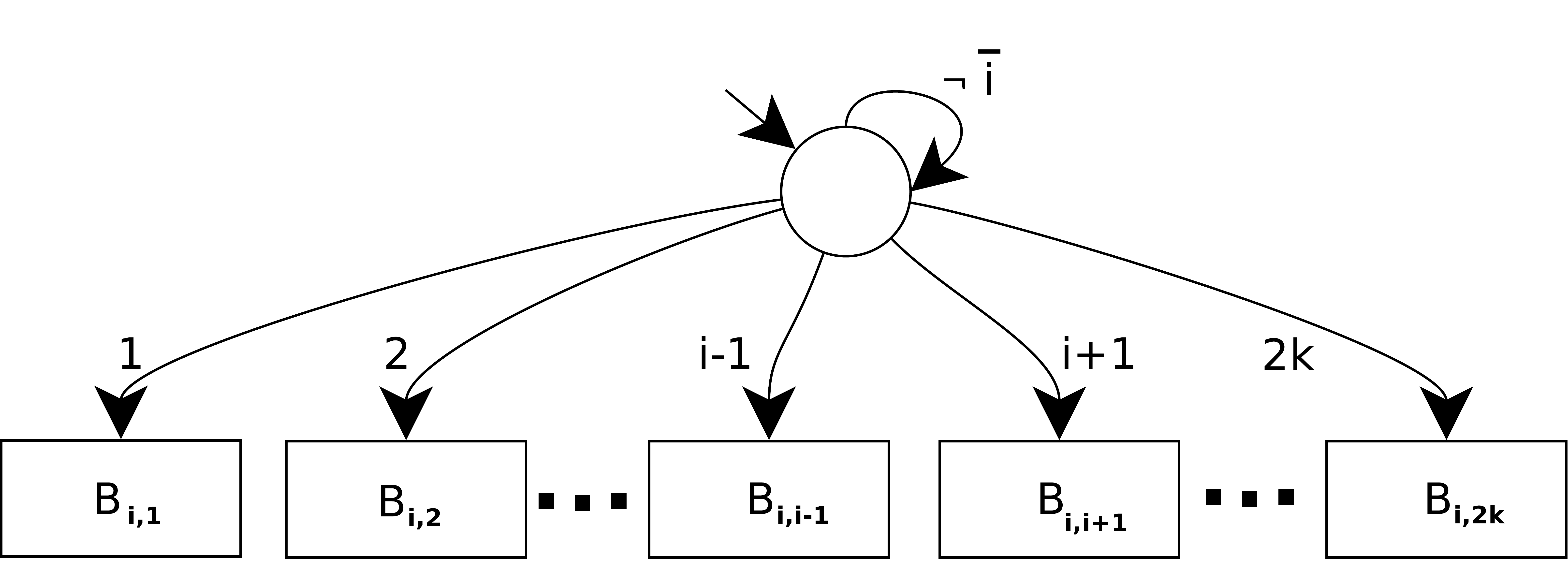}
  \end{center}
  \caption{Automaton $A_i$ -- the same for NBW and NCW, modulo the body of $B_{i, j}$ (fig. \ref{f-bij-buchi} for NBW and fig. \ref{f-bij-cobuchi} for NCW)}
 \label{f-ai}
 \end{figure}

%%%%%%%%%%%%%%%%%%%%%%%%%%%%%%%%%%%%%%%%%%%%%%%%%%%%%%%%%%%%%%%%%%%%%%%%%%%%%%%%%%%%%%%%%

It is easy to see that each $w\in \Sigma_k^\omega$  satisfies at least one of the following three conditions:
there is $s\in \alphabet$ such that $h_i(w)=s$ for infinitely many numbers $i$, or there are infinitely many occurrences of $0$ in $h(w)$, or
there is only a finite number of occurrences of symbols from $\alphabet$ in $w$. Using this observation, we can represent $L_k$ in the following way:
 
\begin{eqnarray}
\nonumber L_k & = & \{ viw \ | v \in \Sigma_k^* \wedge \\
& ( &\nonumber (i \in \alphabet \wedge w \in (\Sigma_k\setminus\{\overline{i}\})^{\omega} \wedge \exists j \in \alphabet\;\; h_m(viw)=j \text{ for infinitely many} \\
& & \text{ numbers } m) \label{inf0} \\
& \nonumber \vee & (i \in \alphabet \wedge w \in (\Sigma_k\setminus\{\overline{i}\})^{\omega} \wedge \exists j \in \alphabet \;\; h_m(viw)=j \text{ for at least }\\ 
& &\obietnic k \text{ numbers } m \text{ such that } m\leq |v|) \label{inf1} \\
& \vee & (w \in \{\overline{1}, \dots, \overline{2k}\}^{\omega}))\} \label{noinf}
\end{eqnarray}
 
 \begin{figure}
  \begin{center}
  \includegraphics[width = 160pt]{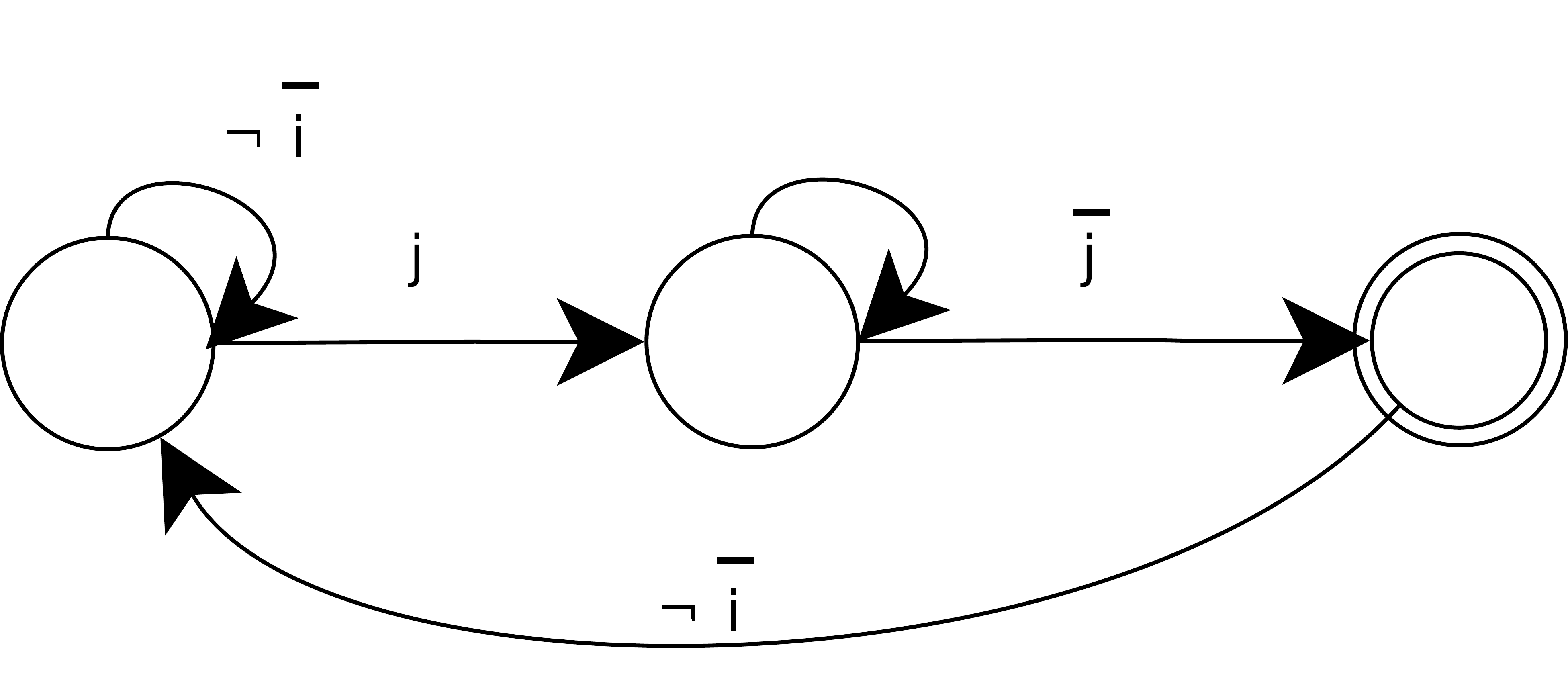}
  \end{center}
  \caption{Automaton $B_{i,j}$ in the NBW case}
 \label{f-bij-buchi}
 \end{figure}

Keeping in mind the above representation, it is easy to build a small NBW recognizing $L_k$ (see Figures \ref{f-NBW}, \ref{f-ai} and \ref{f-bij-buchi}). The accepting state on the bottom left of  Figure \ref{f-NBW} checks if  condition (\ref{noinf}) is satisfied, and the other states (except of the states in the boxes $A_i$ and of the initial state) check if the condition (\ref{inf1}) is satisfied. Reading the input $y$ the automaton first guesses the number $j$ from  condition  (\ref{inf1}) and makes sure that $j$ occurs at least $\obietnic k$ times in $h(y)$. Then it guesses 
$i$ from condition (\ref{inf1}), accepts, and remains in the accepting state forever, unless it spots $\bar{i}$.  This part of the automaton works also correctly for the co-B\"uchi case.
 
The most interesting condition is (\ref{inf0}). It is checked in the following way. At first, the automaton waits in the initial state until it spots $i$ from condition  (\ref{inf0}). Then, it goes to $A_i$, guesses $j$ and goes to the module $B_{i, j}$, which checks if $i$ does not occur any more, and if  both $j$ and $\overline{j}$ occur infinitely often. This can be summarized as: 
 
\begin{theorem} \label{mainth}
Language $L_k$ can be recognized with a nondeterministic B\"uchi automaton with $\Theta(k^2)$ states. 
\end{theorem}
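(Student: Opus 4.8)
The plan is to treat the decomposition of $L_k$ given in (\ref{inf0})--(\ref{noinf}) as the specification, to first verify that it genuinely describes $L_k$, and then to read off from it a modular NBW whose size is dominated by one constant-size gadget per ordered pair $(i,j)\in\alphabet\times\alphabet$, yielding $\Theta(k^2)$ states. Concretely, I would separate the argument into a semantic part (the representation is correct) and a syntactic part (each disjunct is checkable by a finite-memory nondeterministic guess).

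For the semantic part I would start from the trichotomy noted just before (\ref{inf0}): every $w$ either has some $s\in\alphabet$ with $h_m(w)=s$ for infinitely many $m$ (case A), or has infinitely many $0$'s in $h(w)$ (case B), or contains only finitely many symbols of $\alphabet$ (case C). The key auxiliary fact is the equivalence
\[
\big(\exists\, j:\ h_m(w)=j \text{ infinitely often}\big)
\ \Longleftrightarrow\
\big(\exists\, j:\ j \text{ and } \bar j \text{ both occur infinitely often}\big),
\]
with $j$ ranging over $\alphabet$, proved by a counter argument: tracking pending unmatched promises of $j$, a $\bar j$ read with a promise pending is exactly a position with $h_m(w)=j$; if both $j$ and $\bar j$ recur then infinitely many $\bar j$'s are read while a promise is pending, and conversely each such fulfillment consumes a distinct earlier $j$. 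Using this I would match cases to disjuncts. A word in $L_k$ by its first clause lies in case A or B; in case A take the split $viw$ with $i$ the never-fulfilled promise responsible for the $0$ and $j$ the infinitely-fulfilled symbol, giving (\ref{inf0}); in case B the unfulfilled promises occur arbitrarily late, so one can place $i$ after all finitely many genuine fulfillments of the frequent symbol and still collect at least $\obietnic k$ of them in the prefix, giving (\ref{inf1}). A word in $L_k$ by its second clause is eventually all $\sharp$; a short argument (a fulfilled promise would put a non-$\sharp$ symbol into $h$) shows this holds iff from some point on only symbols $\overline{1},\dots,\overline{2k}$ occur, which is (\ref{noinf}). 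The converse inclusions are immediate from the definition of $h$. This case analysis, rather than the automaton itself, is the step I expect to be most delicate, since it hinges on always being able to choose the split point $i$ correctly.

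With the specification in hand, the NBW $\mathcal A$ is assembled from four pieces meeting at the initial state, as in Figures~\ref{f-NBW}, \ref{f-ai} and \ref{f-bij-buchi}. A constant-size accepting sink that loops on $\{\overline{1},\dots,\overline{2k}\}$ and dies on any promise realises (\ref{noinf}). For (\ref{inf1}) the automaton guesses $j$, runs a deterministic gadget that keeps one bit recording whether a $j$-promise is pending and increments a counter bounded by $\obietnic k$ each time a $\bar j$ is read while the bit is set (this counts exactly the positions with $h_m(w)=j$); on reaching $\obietnic k$ it guesses $i$ and enters an accepting state looping on every letter except $\bar i$. For (\ref{inf0}) the initial state waits, on reading the guessed promise $i$ passes to module $A_i$, which guesses $j$ and enters $B_{i,j}$; by the equivalence above it suffices for $B_{i,j}$ to forbid $\bar i$ and to verify, with $O(1)$ states, that $j$ and $\bar j$ both occur infinitely often (a standard two-condition B\"uchi gadget).

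Finally I would count. The module $A_i$ part has $2k$ copies, each routing to $2k$ constant-size submodules $B_{i,j}$, so it uses $\Theta(k^2)$ states; the (\ref{inf1}) part contributes one $O(k)$ counter per guessed $j$, hence $O(k^2)$; the remaining pieces use $O(k)$. The grand total is $\Theta(k^2)$, and a routine run-by-run comparison of $\mathcal A$ against (\ref{inf0})--(\ref{noinf}) confirms $L(\mathcal A)=L_k$. \eop
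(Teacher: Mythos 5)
Your proposal is correct and follows essentially the same route as the paper: it uses the decomposition (\ref{inf0})--(\ref{noinf}), builds the modular automaton of Figures~\ref{f-NBW}, \ref{f-ai} and \ref{f-bij-buchi} with constant-size $B_{i,j}$ gadgets, and counts $\Theta(k^2)$ states. The only difference is that you spell out the verification of the decomposition (in particular the equivalence between ``$h_m(w)=j$ infinitely often'' and ``$j$ and $\bar j$ both occur infinitely often''), which the paper leaves implicit as ``easy to see.''
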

 
 \begin{figure}
  \begin{center}
  \includegraphics[width = 270pt]{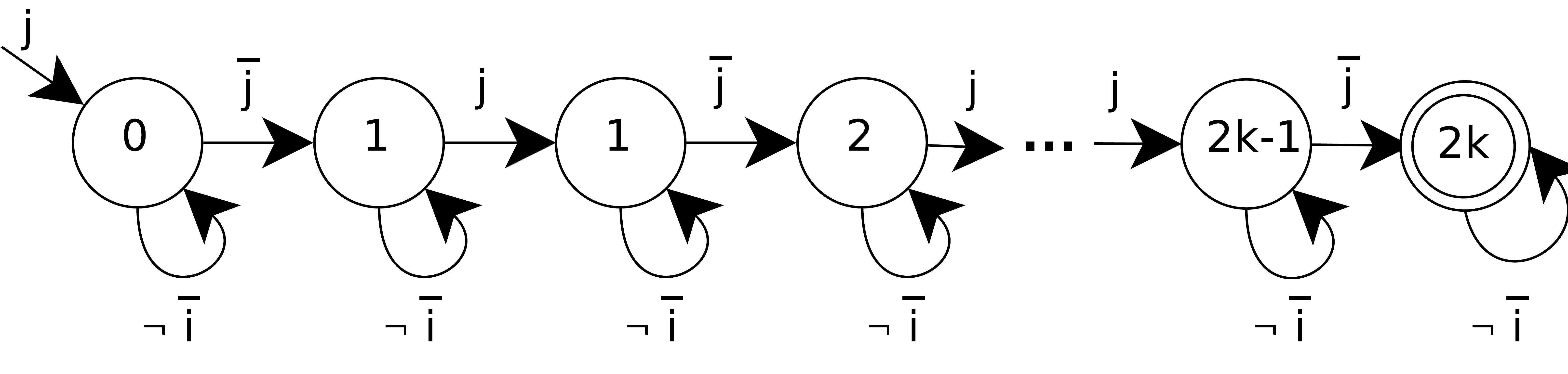}
  \end{center}
  \caption{Automaton $B_{i,j}$ in the NCW case}
 \label{f-bij-cobuchi}
 \end{figure}
 
Condition (\ref{inf0}) cannot be checked by any NCW. However, it can be replaced  by the condition
\begin{eqnarray}
\nonumber i \in \alphabet \wedge w \in (\Sigma_k\setminus\{\overline{i}\})^{\omega} \wedge \exists j \in \alphabet \;\; h_m(viw)=j \text{ for at least } \obietnic k   \text{ numbers } m \label{inf0'} 
\end{eqnarray}
which leads us to a NCW as on Figures \ref{f-NBW}, \ref{f-ai} and \ref{f-bij-cobuchi}. In this case, automaton $B_{i, j}$ needs to count to $\obietnic k$, so it needs $\Theta(k)$ states. Therefore, the whole NCW automaton has $\Theta(k^3)$ states. Actually, we believe that every NCW recognizing $L_k$ indeed needs $\Theta(k^3)$ states. 
 
Now we are ready to state our main theorem:
 
\begin{theorem} \label{main-theorem}
Every NCW recognizing $L_k$ has at least $k \cdot \frac{k^{4/3}}{8}$ states.
\end{theorem}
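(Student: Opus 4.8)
The plan is to assume that some NCW $\mathcal{A}$ recognizes $L_k$, to normalise it, and then to exhibit a family of words in $L_k$ so large that the accepting runs are forced to use the claimed number of distinct states. First I would replace $\mathcal{A}$ by an equivalent automaton in normal form, at the cost of at most doubling the number of states (so the harmless factor two is absorbed into the final constant), and restrict attention to shortest accepting runs. In normal form an accepting run enters the accepting set $\alpha$ at a single ``commitment'' position and never leaves it, and this is the structural property the whole argument exploits. The guiding intuition is that membership through the first clause of the definition of $L_k$ forces an NCW to guess a never-fulfilled promise (a $0$ in $h(w)$) and to certify that some agent fulfils at least $2k$ times; the automaton may only pass into $\alpha$ once such a certificate is secured, and it may not revise it afterwards.

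The core is an exchange (cut-and-paste) argument. I would fix one agent $i$ whose promise stays unfulfilled -- this already supplies the required $0$ in $h(w)$ and lets the run commit by forbidding $\overline{i}$ on the tail -- and over a carefully chosen block of fulfilments I would build words $u_P$ indexed by count-profiles $P$ recording how many times each agent of a distinguished set has fulfilled, all counts kept strictly below $2k$. For suitable tails each $u_P$ extends to a word of $L_k$, and the key lemma is that if two profiles $P\neq P'$ lead, at an aligned position already inside $\alpha$, to the same state $q$, then splicing onto the run reaching $q$ from $u_P$ an $\alpha$-tail $t$ that completes $u_{P'}$ yields an accepting run on $u_P\,t$. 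I would design $t$ to push the one agent on which $P'$ is ahead of $P$ up to exactly $2k$ and then emit infinitely many unfulfilled promises: relative to $u_{P'}$ the word lies in $L_k$ (it contains a $0$ and an agent fulfilling $2k$ times), but relative to $u_P$ every agent stays below $2k$ while non-$\sharp$ symbols recur forever, so it satisfies neither defining clause and falls outside $L_k$ -- contradicting that the spliced run is accepting. Hence distinct admissible profiles require distinct states.

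It remains to count the admissible profiles and to optimise. For a fixed $i$ the splice is legitimate only at positions already past commitment, and commitment cannot occur before the relevant count is close to $2k$; this restriction is precisely what prevents the naive staircase of all $\Theta(k^2)$ counts from being certified, and balancing the number of distinguished agents against the counting depth (both taken of order $k^{2/3}$) leaves $\Theta(k^{4/3})$ mutually distinguishable configurations. Since a run committed to the unfulfilled promise $i$ must reject on reading $\overline{i}$ whereas a run committed to $i'\neq i$ need not, the $\sim k$ choices of $i$ contribute an independent factor, and multiplying through gives the bound $k\cdot k^{4/3}/8$, with the explicit constant emerging from the precise sizes. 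The main obstacle, and the delicate heart of the proof, is controlling the commitment position: I must guarantee that the compared states genuinely lie inside $\alpha$ so the splice preserves acceptance, while simultaneously keeping all counts below $2k$ so the spliced word truly leaves $L_k$, and I must make the whole argument robust against the automaton's nondeterminism by canonicalising through shortest runs.
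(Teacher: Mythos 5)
Your framework matches the paper's at the outer level: normalise to the normal form (absorbing a factor $2$), fix canonical shortest accepting runs, use cut-and-paste arguments made sound by the normal form, and extract a factor of $k$ from the choices of the agent $i$ whose promise is never fulfilled (this is the paper's Lemma \ref{lemma1}, which in addition needs the huge/small split of the alphabet to keep the four relevant indices distinct --- a point your sketch glosses over but which is repairable). The genuine gap is in the step that must produce the remaining factor $k^{4/3}$. You propose a static fooling-set argument: $\Theta(k^{4/3})$ count-profiles, all counts kept strictly below $\obietnic k$, pairwise distinguished by a single splice performed at an aligned position \emph{already inside} $\alpha$. But as you yourself observe, and as the paper proves (Lemma \ref{observations}(ii)), a run cannot enter $\alpha$ before some agent has fulfilled $\obietnic k-1$ times; at any committed position the reachable profiles therefore essentially reduce to ``one designated agent is at $\obietnic k$ and $\overline{i}$ is forbidden,'' which yields only $O(k)$ distinguishable committed configurations. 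Your two requirements --- splice only inside $\alpha$, and keep every count below $\obietnic k$ --- pull in opposite directions, and the phrase ``balancing the number of distinguished agents against the counting depth (both of order $k^{2/3}$)'' is not backed by any mechanism that could certify $k^{4/3}$ committed, pairwise-exchangeable configurations. A single-splice distinguishability argument of this shape stalls at a linear-type bound, which is exactly the kind of bound the theorem is meant to surpass.

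The paper reaches $k^{4/3}/4$ states per $i$ by a contrapositive, multi-splice construction rather than a fooling set. The states counted are the \emph{pre-commitment} states of the shortest runs on the $k$ words $w_{i,1},\dots,w_{i,k}$, which are pairwise distinct along each run by Lemma \ref{observations}(iii); splices are permitted at shared \emph{non-accepting} states, because continuing with the suffix of another accepting run still yields an accepting run. If fewer than $k^{4/3}/4$ states were available, one could chain together roughly $k^{1/3}$ such splices, hopping between the runs on the $w_{i,m}$ so that each agent $m$ is visited only once and contributes at most $k$ fulfilments, producing a single word outside $L_k$ that $\cal A$ nevertheless accepts. That such a chain always exists is a nontrivial combinatorial statement about painted matrices (Lemma \ref{combinatorial-lemma}, proved by the multicolumn and dirty-rows argument), and it has no counterpart in your proposal. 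To repair your proof you would need to replace the profile-counting step by an argument of this global, path-finding character; the exchange lemma you state, while sound, is only the one-step building block.
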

 
The rest of this paper is devoted to the proof of this theorem. In subsection \ref{disjoint} we will define, for each co-B\"uchi automaton in the normal form, recognizing $L_k$ a family of $k$ disjoint sets of states, and in subsection \ref{mtheorem} we will show that each such set has at least $\frac{k^{4/3}}{4}$ states. As we have seen in subsection \ref{preliminaria}, for a given NCW with $n$ states we can always build a NCW in the normal for with at most $2n$ states, which finally leads to  $\frac 1 2 \cdot k \cdot \frac{k^{4/3}}{4}$ lower bound. 
 
\subsection{The k disjoint sets of states}\label{disjoint}
Let ${\cal A}=\langle \Sigma_k , Q, q_0, \delta , \alpha\rangle$ be an NCW in the normal form with $N$ states that recognizes $L_k$. 
 
Let $w_{i,j} = i(j^N\overline{1},\overline{2},\dots,\overline{i-1}, \overline{i+1}, \dots, \overline{2k})^{\omega}$. 
For every $i\neq j$ let  $q_0$, $q_{i, j}^1$, $q_{i, j}^2$, $q_{i, j}^3, \dots$ be a fixed shortest accepting run of $\cal A$ on $w_{i, j}$.

Words $w_{i,j}$ will be the main tool in our attempt to fool the automaton if it has too few states so let us comment on their structure. First notice, that the $i$, the very first symbol of $w_{i,j}$, will turn into the only $0$ in $h(w)$ -- this is, among other reasons, since for all $m\neq i$ the symbol $\overline{m}$ occurs infinitely many times in $w$. See also that if we replaced
the blocks $j^N$ in the definition of $w_{i,j}$ by just a single $j$, then the word would still be in $L_k$ -- since we do not count promises but fulfillments, the remaining $j$'s are almost redundant. It is only in the proof Lemma \ref{observations}(ii) that we will need them. In the rest of the proof we will only be interested in one state of $A$ per each such block of symbols $j$. 
 For this reason we define  $block(l) =  N + 1 + l(N+2k-1)$ as the function that points to the index of the state in run $q_0, q_{i, j}^1, q_{i, j}^2, q_{i, j}^3, \dots$  just after reading the $l$-th block  $j^N$.

Let $Q_{i, j} = \{q_{i,j}^{block(c)} | c \in {\cal N}\}$.
 
\begin{lemma} \label{lemma1}
For every $i, j, m, l \in \alphabet$ such that $m \neq i \neq j \neq l$ and $m \neq j$, the sets $Q_{i, m}$ and  $Q_{j, l}$ are disjoint. 
\end{lemma}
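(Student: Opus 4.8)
The plan is to argue by contradiction, using the words $w_{i,m}$ and $w_{j,l}$ to splice together a word that $\cal A$ is forced to accept but that does not belong to $L_k$. Suppose some state $q$ lies in $Q_{i,m}\cap Q_{j,l}$, so that $q=q^{block(c)}_{i,m}=q^{block(d)}_{j,l}$ for suitable $c,d$. First I would record the shape of $h(w_{i,m})$ and $h(w_{j,l})$: each of these words produces a single $0$ (the leading symbol, whose matching fulfillment never appears) and makes its ``repeated'' agent ($m$, resp.\ $l$) appear infinitely often in the interpretation, so that both words lie in $L_k$ via its first clause.

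The core construction is the word $W=u\,t$, where $u=w_{i,m}[1,block(c)]$ is the prefix of $w_{i,m}$ read up to the occurrence of $q$, and $t=w_{j,l}[block(d)+1,\infty]$ is the suffix of $w_{j,l}$ read after $q$. Splicing the fixed run on $w_{i,m}$ up to $q$ with the tail of the accepting run on $w_{j,l}$ from $q$ onwards yields a legal run of $\cal A$ on $W$; since $\cal A$ is co-B\"uchi, the set of states this run visits infinitely often coincides with that of the (accepting) $w_{j,l}$-tail, so the run is accepting and $W\in L({\cal A})=L_k$.

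The heart of the argument — and the step I expect to be the most delicate — is to show directly that $W\notin L_k$ by analysing $h(W)$. Writing $u=i\,(m^N F_i)^c\,m^N$ and $t=F_j\,(l^N F_j)^\omega$, where $F_s$ denotes the list of all fulfillments except $\bar s$, the key point is that $W$ contains no $0$. The only promises occurring anywhere in $W$ are of agents $i$, $m$ and $l$; and $\bar i$, $\bar m$, $\bar l$ all occur in $F_j$ (since $i\neq j$, $m\neq j$ and $j\neq l$), while $F_j$ recurs throughout the infinite tail $t$. Hence every promise of $W$ has a later matching fulfillment, so no position of $h(W)$ equals $0$; in particular the leading $i$, which was the sole $0$ of $w_{i,m}$, is now fulfilled by some $\bar i$ coming from $t$. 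On the other hand $l$ still appears infinitely often in $h(W)$, because each tail block $l^N F_j$ contributes one $\bar l$ matched to a fresh promise $l$. Thus $h(W)$ has no $0$ and is not eventually constantly $\sharp$, so $W$ satisfies neither clause of the definition of $L_k$, giving $W\notin L_k$ and the desired contradiction.

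I would pay particular attention to where the four hypotheses are used. The conditions $m\neq i$ and $j\neq l$ make $w_{i,m}$, $w_{j,l}$ (and hence $Q_{i,m}$, $Q_{j,l}$) well defined, whereas the remaining conditions $i\neq j$ and $m\neq j$ are exactly what guarantees that the promises of $i$ and $m$ inherited from $u$ get fulfilled by $t$, so that $W$ loses the $0$ that would otherwise place it in $L_k$. The only genuine obstacle is the careful bookkeeping of the interpretation $h$ across the splice point: one must verify that no unintended $0$ survives and that the repeated agent $l$ indeed recurs infinitely often. Once that is checked, the contradiction is immediate.
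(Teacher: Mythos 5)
Your proposal is correct and follows essentially the same route as the paper: splice the fixed shortest run on $w_{i,m}$ up to the shared state with the tail of the run on $w_{j,l}$, observe the resulting run is accepting, and derive a contradiction because the spliced word has no $0$ in its interpretation (the only promised agents are $i$, $m$, $l$, all distinct from $j$, so all their fulfillments recur in the tail) while $l$ still occurs infinitely often in $h$. Your explicit check that the second clause of the definition of $L_k$ also fails is a point the paper leaves implicit, but the argument is the same.
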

 
\begin{proof}
 Suppose that there exist $i, j, m, l \in \alphabet$, and $s, t \in {\cal N}$ such that $m \neq i \neq j \neq l$, $m \neq j$ and $q_{i, m}^s = q_{j, l}^t$. Let $v = w_{i, m}[0,  block(s)] . w_{j, l}[block(t) + 1, \infty]$. This word is accepted by $\cal A$, because there exists an accepting run $q_0$, $q_{i,m}^1$, $\dots$ , $q_{i,m}^{block(s)}, q_{j,l}^{block(t) + 1}, q_{j,l}^{block(t) + 2}, \dots$ of $\cal A$. 
 
The only letters without the overline in $v$ are $i$, $m$ and $l$. However, the only overlined letter that does not occur infinitely often in $v$ is $\overline{j}$. This letter is different from $i$, $m$ and $l$ because of the assumptions we made. Therefore  $0$ does not occur in $h(v)$ and $v\not \in L_k$.\eop
\end{proof}
 
We say that $l$ is \emph{huge} if $l>k$ and that $l$ is \emph{small} otherwise.

For every $i$ let $Q_i = \bigcup \{Q_{i, j} | j \text{ is small} \}$. A simple conclusion from Lemma \ref{lemma1} is that for each huge $i, j$ such that $i \neq j$ the sets $Q_i$ and $Q_j$ are disjoint. This implies, that Theorem \ref{main-theorem} will be proved, once we prove the following lemma:
 
\begin{lemma} \label{lemma2}
For each huge $i \in \alphabet$ the size of the set $Q_i$ is greater than $\frac{k^{4/3}}{4}$.
\end{lemma}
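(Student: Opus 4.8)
The plan is to bound $|Q_i|$ from below by a double‑counting argument: I will show that each of the $k$ runs $w_{i,j}$ (for small $j$) contributes many \emph{distinct} states to $Q_i$, and that no state can be reused by too many runs. Three ``fooling'' constructions supply the structure, and in all of them I exploit two facts about $h(w_{i,j})$: since $i$ is huge and $\overline{i}$ never occurs in $w_{i,j}$, the leading symbol $i$ is the unique $0$ of $h(w_{i,j})$, and each block contributes exactly one occurrence of $j$ to $h$.

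First I would prove \emph{within‑run distinctness}: for fixed small $j$ the non‑accepting sampled states $q_{i,j}^{block(0)},q_{i,j}^{block(1)},\dots$ preceding the entry into $\alpha$ are pairwise distinct. As $w_{i,j}=i\cdot B^{\omega}$ is periodic after its first letter ($B=j^N\overline{1}\ldots\overline{2k}$ without $\overline{i}$), deleting whole blocks leaves the word unchanged. Hence if $q_{i,j}^{block(c)}=q_{i,j}^{block(c')}\notin\alpha$ with $c<c'$, splicing out the blocks in between gives an accepting run that is in $\alpha$ at an earlier sampled position than the chosen run is; since the chosen run was a \emph{shortest} accepting run and, in normal form, a run can never leave $\alpha$ (so the excised loop really stays outside $\alpha$), this is a contradiction.

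Next I would prove an \emph{entry lower bound}: each run stays outside $\alpha$ for at least $\obietnic k$ blocks, hence contributes at least $\obietnic k$ distinct states. Suppose $q_{i,j}^{block(c)}\in\alpha$ with $c<\obietnic k$. By normal form a later full block $j^N$ is read entirely inside $\alpha$, and pigeonholing its $N+1$ visited states yields a $j$‑loop inside $\alpha$; pumping it gives an accepting run on a word of the form $w_{i,j}[0,\,\cdot\,]\cdot j^{\omega}$. There the leading $i$ together with the never‑fulfilled trailing $j$'s create infinitely many $0$'s in $h$, while no symbol occurs $\obietnic k$ times (only about $c$ of them are produced), so the word is outside $L_k$ yet accepted — a contradiction. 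Combined with the previous step, $\sum_{j\text{ small}}T_j\ge \obietnic k\cdot k$, where $T_j$ is the entry block of run $j$.

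The crux is bounding reuse across runs. The basic tool is a \emph{cross‑run splice}: if $q_{i,j_1}^{block(c_1)}=q_{i,j_2}^{block(c_2)}$, follow $w_{i,j_1}$ to this common state, continue along $w_{i,j_2}$ until it enters $\alpha$, then pump $j_2^{\omega}$. The $\overline{j_1}$'s in the grafted $w_{i,j_2}$‑part fulfil the pending $j_1$‑promises, so $h$ acquires about $c_1+(T_{j_2}-c_2)$ occurrences of $j_1$ and $T_{j_2}-c_2$ of $j_2$, plus infinitely many $0$'s; avoiding a contradiction forces $c_1+T_{j_2}-c_2\ge \obietnic k$, which (with its symmetric counterpart) pins shared states to nearby block levels. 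Since this pairwise bound does not yet forbid heavy same‑level sharing, the decisive step is a \emph{multi‑agent cycling} construction: if many small agents share a common sub‑path to $\alpha$, I build a single accepting word that cycles through these agents block by block and then pumps the last one unfulfilled; each agent then occurs only about $(\text{path length})/(\text{number of agents})$ times, so once this drops below $\obietnic k$ the word escapes $L_k$ while being accepted. As every path to $\alpha$ has length $\ge\obietnic k$, this caps how many agents may traverse a shared path. Balancing the resulting reuse bound (which improves as the available block‑depth, controlled by $\max_j T_j$, grows) against the trivial estimate $|Q_i|\ge\max_j T_j$ and the incidence count $\sum_j T_j\ge \obietnic k\cdot k$ then yields $|Q_i|>\frac{k^{4/3}}{4}$. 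I expect this reuse step to be the main obstacle: the pairwise splice only localizes sharing, so the cycling construction and the careful accounting of how many agents can reach $\alpha$ along a shared path while keeping every count below $\obietnic k$ are what actually drive the bound, and making the two opposing estimates balance exactly at the $4/3$ exponent is the delicate part. \eop
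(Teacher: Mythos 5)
Your preliminary steps line up with the paper's: the within-run distinctness of the sampled states $q_{i,j}^{block(0)},\dots,q_{i,j}^{block(acc(i,j))}$ and the entry bound $acc(i,j)\geq \obietnic k-1$ are exactly Lemma~\ref{observations}(iii) and (ii), proved by the same pump-out-a-loop and pump-$j^\omega$ arguments, and your cross-run splice is the same device as the paper's Lemma~\ref{lemma1}-style surgery. These parts are fine (and they already give the easy case where some $acc(i,j)$ is itself $\geq \frac{k^{4/3}}{4}-1$).

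The gap is the step you yourself flag as ``the main obstacle'': everything after ``the decisive step is a multi-agent cycling construction'' is a promissory note for what is in fact the entire technical core of the paper. Two concrete problems. First, the accounting skeleton you propose cannot close as stated: the pairwise splice constraint $c_1+T_{j_2}-c_2\geq \obietnic k$ is satisfied with equality when all runs enter $\alpha$ at level $\obietnic k$ and share states at equal levels, so nothing prevents a \emph{single} state from being shared by all $k$ runs at the same block level; and such sharing creates no fooling word, because to ``cycle through agents'' you need a chain of state coincidences --- one at essentially every block along the way to $\alpha$ --- not one shared state (from a common state $q$ the continuations of the different runs immediately diverge, since they read different letters). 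Hence there is no per-state reuse bound $r$ with which to divide your incidence count $\sum_j T_j\geq \obietnic k\cdot k$, and no choice of $r$ of the form $O(k^{2/3})$ is justified. Second, what is actually required is a global existence statement: arrange the positions $(j,\,\text{block level})$ into a $k\times\lfloor k^{4/3}/4\rfloor$ grid coloured by the states $q_{i,j}^{block(\cdot)}$; one must show that if fewer than $k^{4/3}/4$ colours occur then there is a left-to-right path that switches rows via colour coincidences at least once every $k$ steps and never returns to a visited row. This is the paper's Lemma~\ref{combinatorial-lemma}, whose proof (splitting into multicolumns of width $k/2$, the clean/dirty row contamination procedures, and the count showing fewer than $k$ rows ever get dirty) occupies all of subsection~\ref{clemma-proof} and is where the exponent $4/3$ actually comes from. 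Your proposal identifies the right kind of fooling word but supplies no argument that the required network of coincidences exists, so as written the proof does not go through.
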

 
\subsection{Combinatorial lemma}\label{clemma}
 
The $n \times m$ \emph{state matrix} is a two-dimensional matrix with $n$ rows and $m$ columns. We say that $n \times m$ state matrix is $l$-\emph{painted} if each of its cells is labeled with one of $l$ colors and  the minimal distance between two cells 
in the same row and of the same color is at least $m$.
 
For a  painted $n \times m$ state matrix, we say that an $M_{i, j}$ is a cell \emph{on the left border}  if $j=1$, and is  \emph{on the right border}  if $j = m$. We say that $M_{i, j}$ is \emph{a successor} of $M_{i', j'}$ if $i = i'$ and $j = j'+1$.
 
The \emph{path} $w$ through a painted $n \times m$ state matrix $M$ is a sequence of cells $c_1$, $c_2$, $\dots$, $c_z$ such that $c_1$ is on the left border, $c_z$ is on the right border,  and for each $s<z$ either $c_{s+1}$ is a successor of $c_s$ (we say that ``there is a right move from 
$c_s$ to $c_{s+1}$'') or $c_s$ and $c_{s+1}$ are of the same color (we say that ``there is a jump from 
$c_s$ to $c_{s+1}$'')
 
%there is a \emph{move} from $c_{s}$ to $c_{s+1}$. A move can be either an ``right move'', when $c_{s+1}$ is right after $c_s$, or a ``jump'', when $c_s$ and $c_{s+1}$ have the same color.
 
 We say that a path $w$
%, which starts from a row $j$,  -- to już do niczego nie jest potrzebne
is \emph{good}, if there are no consecutive $k$ right moves in $w$, and no jump leads to (a cell in) a row that was already visited by this path. Notice that in particular a good path visits at most $k$ cells in any row.

Our main combinatorial tool will be:
 
\begin{lemma} \label{combinatorial-lemma}
 Let $M$ be an $\lfloor \frac {k^{4/3}} {4} \rfloor$-painted $k \times \lfloor \frac{k^{4/3}} {4} \rfloor$ state matrix. Then there exists a good path on $M$. 
\end{lemma}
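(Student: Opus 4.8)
My plan is to read the painting condition structurally first. Since the matrix has $m=\lfloor k^{4/3}/4\rfloor$ columns and is painted with exactly $l=m$ colours, the requirement that two equally-coloured cells of a row be at distance at least $m$ is impossible to meet nontrivially (two cells of a row are at most $m-1$ apart), so it forces every row to be a \emph{permutation} of the $m$ colours. Consequently each colour occurs exactly once per row, and a jump out of a cell of colour $c$ may be directed into \emph{any} still-unvisited row, landing at the unique column where that row places $c$. The task therefore reduces to: starting in column $1$, reach column $m$ using right moves (at most $k-1$ in a row before a jump is forced, since a good path makes no $k$ consecutive right moves) and colour-jumps into fresh rows. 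I would build such a path greedily, always pushing the rightmost reachable column, and prove it never gets stuck before column $m$.

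The heart of the argument is a one-line pigeonhole: any $c$ cells of a single row carry at most $c$ distinct colours, so if a set $D$ of colours has $|D|>c$, then in \emph{every} row at least $|D|-c\ge 1$ of the colours of $D$ occurs strictly to the right of column $c$. I apply this at the \emph{frontier}: let $c^*$ be the largest column any good path reaches, let $F$ be the rows reachable at (or within one right-run of) the frontier, and let $D$ be the colours occurring on those frontier cells. If the frontier cannot be advanced, then every colour of $D$ must lie in columns $\le c^*$ in every still-unvisited row -- otherwise a jump from the corresponding frontier cell would land to the right of $c^*$, contradicting maximality. By the pigeonhole this forces $|D|\le c^*$. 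Hence it suffices to prove the reverse inequality $|D|>c^*$ whenever $c^*<m$.

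To obtain $|D|>c^*$ I would show the frontier is \emph{colour-rich}, and this is where I expect the real work and the calibration of the constants to live. The same pigeonhole, run as the frontier advances, tends to make \emph{every} unused row reachable near the frontier, so $F$ should contain almost all of the $k$ rows; a crossing to column $m$ only ever consumes about $m/k=\Theta(k^{1/3})$ rows, so $|F|\ge k-o(k)$ can be maintained. The frontier colours $D$ then come from a window of width $\Theta(k)$ (the right-move reach) spread across $\Theta(k)$ rows, and the point is that adversarial ``confinement'' of these colours to the left cannot hold in all of those rows simultaneously. Balancing the three quantities -- the $k$ rows, the per-row right budget $k$, and the column-deficit in the pigeonhole -- is exactly what yields the threshold $m=\lfloor k^{4/3}/4\rfloor$.

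The main obstacle, and the part I would spend most effort on, is the bookkeeping that couples reachability with the no-revisited-row constraint: I must guarantee simultaneously that the witnessing paths to the frontier rows use only $o(k)$ rows (so that ``unused'' genuinely means $\approx k$ rows) and that the frontier colours are truly diverse despite the adversary's freedom to overlap the per-row colour sets. I expect to handle this by an induction on the frontier column that maintains both a large set of frontier-reachable rows and a bound on rows consumed, advancing the frontier by an amortised $\Theta(k^{1/3})$ per consumed row. The delicate point is ruling out the recursive ``thick wall'' configurations, in which the adversary makes crossing a block of columns reduce to crossing a slightly thinner block; controlling the depth of this recursion against the diminishing supply of fresh rows is what ultimately pins the provable exponent at $4/3$.
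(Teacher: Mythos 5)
Your opening observation is correct and clean: since the matrix has $m=\lfloor k^{4/3}/4\rfloor$ columns, is painted with at most $m$ colours, and same-coloured cells in a row must be at distance at least $m$, every row is a permutation of the $m$ colours. Your pigeonhole (``at most $c$ distinct colours can sit in the first $c$ columns of a row'') is also the right kind of counting. But the proof stops exactly where the lemma actually lives. The entire argument is made to rest on the inequality $|D|>c^*$ for the frontier colour set $D$, and you never establish it; you explicitly defer it (``this is where I expect the real work\dots to live'', ``the main obstacle\dots is the bookkeeping''). Worse, with $D$ defined as you define it -- the colours appearing in the right-run window of the frontier rows -- the inequality is simply false in general: the adversary can arrange that all frontier rows carry the \emph{same} $\Theta(k)$ colours in that window (the painting condition only forbids repetitions within a row, not across rows), giving $|D|=\Theta(k)$ while $c^*$ may already be $\Theta(k^{4/3})$. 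So the two-sided squeeze ($|D|\le c^*$ from being stuck versus $|D|>c^*$ from ``colour-richness'') does not close, and the greedy ``push the rightmost column'' strategy has no guarantee against painting itself into a corner where the rows containing the needed colours far to the right have all been visited.

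The paper resolves precisely this difficulty by arguing globally and \emph{backwards} rather than greedily forwards. It cuts $M$ into multicolumns of width $k/2$, calls two rows brothers in a multicolumn if they share a colour there, and runs a contamination procedure from the last multicolumn back to the first: any row with fewer than $\lceil s\rceil$ clean brothers (where $s=k^{1/3}/2$) is discarded together with all its brothers. Each discarding step kills at least $k/2$ of the at most $n$ colours (here the ``permutation row'' fact is used, restricted to a multicolumn), so it happens at most $\lceil s\rceil$ times per multicolumn, costing at most $\lceil s\rceil^2$ rows each time; over all $\lceil s\rceil-1$ multicolumns this is fewer than $k$ rows, so clean rows survive. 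A clean row then has, in \emph{every} multicolumn, at least $\lceil s\rceil$ clean brothers, and since only $\lceil s\rceil-1$ jumps are ever needed, one can always jump to a fresh clean row. This invariant -- every surviving row is well-connected everywhere downstream -- is exactly the ``colour-richness'' your sketch needs but does not supply, and it is obtained by a pre-processing count rather than by analysing a maximal partial path. As it stands, your proposal is a plan with the central lemma missing, not a proof.
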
 
 
The proof of this lemma is left to subsection \ref{clemma-proof}
 
\subsection{From automaton to state matrix}\label{mtheorem}
We are  now going to prove Lemma \ref{lemma2}. Let a huge $i \in \alphabet$ be fixed in this subsection
and assume that $|Q_i| < \frac {k^{4/3}} 4$.
  We will show that there exists a word $w$ such that $\cal A$  accepts $w$ and  no agent fulfiles its promises at least $\obietnic k$ times in $w$.
 
Let $j$ be an small number from $\alphabet$. Let us begin from some basic facts about $Q_{i, j}$:

\begin{lemma}\label{observations}
\begin{enumerate}
 \item[(i)] There exists a number $l$ such that for every $s < l$ the state $q_{i, j}^{block(s)}$ is not in $\alpha$ and for every $s\geq l$ the state $q_{i, j}^{block(s)}$ is in $\alpha$. Define $acc(i, j)=l$.
 \item[(ii)] \label{contradiction}  No accepting state from $Q_i$ can be reached on any run of $A$ before some agent  fulfilled its promises $\obietnic k-1$ times.  It also implies that $acc(i, j)\geq \obietnic k -1$.
 \item[(iii)] \label{positional} The states $q_{i, j}^{block(0)}, q_{i, j}^{block(1)}, \dots, q_{i, j}^{block(acc(i, j))}$ are pairwise different. 
\end{enumerate}
\end{lemma}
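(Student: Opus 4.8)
The plan is to prove the three parts essentially independently. Part (i) uses only that $\cal A$ is in normal form and that the fixed run is accepting; part (ii) is a pumping argument that crucially exploits the length $N$ of the blocks $j^N$; and part (iii) combines a period-level pumping with the minimality built into the notion of a shortest run.

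For (i), I would argue as follows. Since $\cal A$ is in normal form, every transition out of an $\alpha$-state leads to an $\alpha$-state, so along the fixed run $q_0,q_{i,j}^1,q_{i,j}^2,\dots$ the set of indices at which the run sits in $\alpha$ is upward closed; in particular, once $q_{i,j}^{block(s)}\in\alpha$ we have $q_{i,j}^{block(s')}\in\alpha$ for all $s'\geq s$. Because the run is co-B\"uchi accepting it leaves $\alpha$ only finitely often, hence $q_{i,j}^{block(s)}\in\alpha$ for all sufficiently large $s$. The least such $s$ is the threshold $l=acc(i,j)$ claimed.

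For (ii), I would argue by contradiction. Suppose some run of $\cal A$ reaches an accepting state $q\in Q_i\cap\alpha$, say $q=q_{i,m}^{block(c)}$ with $m$ small, after reading a finite word $u$ in which no agent is fulfilled $\obietnic k-1$ times, i.e. each agent is fulfilled at most $\obietnic k-2$ times. I continue this run along the fixed run of $w_{i,m}$ for one further period, reading the overlined block $\overline{1}\cdots\overline{i-1}\,\overline{i+1}\cdots\overline{2k}$ and then entering the next block $m^{N}$. This is exactly where the block length is used: the $N+1$ states visited while reading $m^{N}$ all lie in $\alpha$ (normal form, since $q\in\alpha$), so by pigeonhole two of them coincide, yielding an $m$-labelled loop inside $\alpha$. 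Following the run to this repeated state $p$ and then looping on $m$ forever gives a valid run that stays in $\alpha$ and is therefore accepting, on the word $v=u\cdot\overline{1}\cdots\overline{i-1}\,\overline{i+1}\cdots\overline{2k}\cdot m^{\omega}$. Computing $h(v)$: the trailing $m^{\omega}$ produces infinitely many $0$'s (these $m$'s are never fulfilled), so the second clause in the definition of $L_k$ fails; the single overlined block raises the fulfillment count of each agent by at most one, so every agent is fulfilled at most $\obietnic k-1<\obietnic k$ times in $v$, so the first clause fails too. Hence $v\notin L_k$, contradicting that $\cal A$ recognizes $L_k$. To obtain $acc(i,j)\geq\obietnic k-1$ I would apply this to $u=w_{i,j}[1,block(acc(i,j))]$, in which the only fulfilled agent is $j$, fulfilled exactly $acc(i,j)$ times.

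For (iii), suppose $q_{i,j}^{block(a)}=q_{i,j}^{block(b)}$ with $0\leq a<b\leq acc(i,j)$. By (i), $q_{i,j}^{block(acc(i,j))}$ is the only state in the list that lies in $\alpha$, so a collision forces $b\leq acc(i,j)-1$ (otherwise the two states differ by membership in $\alpha$). Removing periods $a+1,\dots,b$ from the run produces a run $\rho'$ which, since $w_{i,j}$ is periodic, is again a run on $w_{i,j}$, and it is accepting because its tail coincides with the tail of the original run. Comparing $\rho'$ with the original shortest run position by position, at position $block(acc(i,j)-(b-a))$ the run $\rho'$ sits in $q_{i,j}^{block(acc(i,j))}\in\alpha$ while the original run sits in $q_{i,j}^{block(acc(i,j)-(b-a))}\notin\alpha$. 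This contradicts the assumption that the fixed run is a shortest accepting run, so all the listed states are pairwise distinct.

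The main obstacle is part (ii): the continuation of the run must be chosen so that the resulting word is simultaneously accepted, which forces the use of an $\alpha$-loop and is precisely where the length-$N$ blocks $j^{N}$ are indispensable, and outside $L_k$, which requires careful bookkeeping of each agent's fulfillment count, in particular of the extra $+1$ contributed by the single overlined block. Parts (i) and (iii) are comparatively routine given the normal form and, for (iii), the definition of a shortest run.
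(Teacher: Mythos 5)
Your proof is correct and follows essentially the same route as the paper: (i) from the normal form plus co-B\"uchi acceptance, (ii) by the pigeonhole/pumping argument on a block $m^N$ read from an $\alpha$-state, yielding an accepted word with a tail $m^\omega$ that lies outside $L_k$, and (iii) by pumping out a repeated period and contradicting the shortest-run property. Your version of (ii) is in fact somewhat more careful than the paper's one-sentence sketch, since you explicitly splice an arbitrary run reaching a state of $Q_i\cap\alpha$ with a further period of the fixed run on $w_{i,m}$, which is what the ``any run'' formulation of the claim actually requires.
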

\begin{proof}
 
\begin{enumerate}

\item[(i)] This is since $\cal A$ is in the normal form.

\item[(ii)] While reading a block of $N$ symbols $j$, the automaton is in $N+1$ states, so there is a state  visited at least twice. If this state was accepting, then a pumping argument would be possible -- we could simply replace the suffix of the word after this block with the word $j^\omega$ and the new word would still  be accepted, despite the fact that it is not in $L_k$.

\item[(iii)] Suppose  $q_{i, j}^{block(s)}$ and  $q_{i, j}^{block(t)}$ are equal and non-accepting. For every $s < t \leq acc(i, j)$, the words $w_{i,j}[block(s)+1, \infty]$ and $w_{i,j}[block(t)+1, \infty]$ are identical. Then a pumping argument works again -- we can find a shorter accepting run by pumping out  the states $q_{i, j}^{block(s)}, \dots, q_{i, j}^{block(t)-1}$. But this contradicts the assumption that our run is shortest.\eop

\end{enumerate}
\end{proof}
We want to show that $|Q_i| \geq \frac {k^{4/3}} 4$. If for any small $j$ there is $acc(i, j) \geq \frac {k^{4/3}} 4 - 1$ then, thanks to Lemma  \ref{observations}(iii) we are done. So, for the rest of this subsection, we assume that
$acc(i, j) < \frac {k^{4/3}} 4 - 1$ for each small $j$.

%%%%%%%%%%%%%%%%%%%%%%%%%%%%%%%%%%%%%%%%%%%%%%%%%%%%%%%%%%%%%%%%%%%%%%%%%%%%%%%%%%%%%%%%%%%%%
 
We will now construct a $\lfloor \frac {k^{4/3}} {4} \rfloor$ - painted $k \times \lfloor \frac {k^{4/3}} {4} \rfloor$  state matrix $M$ in such a way, that its $m$'th row will, in a sense, 
represent the accepting run on the word $w_{i,m}$. More precisely, take a $k \times \lfloor \frac {k^{4/3}} {4} \rfloor$   matrix $M$ and
call the cells $M_{m,j}$ of $M$, where $j\leq acc(i,m)$, \emph{ real cells} and call the cells $M_{m,j}$ of $M$ with $j > acc(i,m)$ \emph{ghosts}. For a ghost cell $M_{m,j}$ and the smallest natural number $l$ such that $j-lk\leq acc(i,m)$ call the real
cell $M_(m,j-lk)$ \emph{the host of} $M_{m,j}$. Notice that each ghost has its host, since, by Lemma \ref{observations} (ii),  $acc(i,m) \geq \obietnic k - 1$, which means that there are at least $k$ real cells in each row.

If  $M_{m,j}$ is real then define its color as $q_{i, m}^{block(j-1)}$. If  $M_{m,j}$ is a ghost then define its color as the color of its host.  Now see that $M$ is indeed a $\lfloor \frac {k^{4/3}} {4} \rfloor$ - painted $k \times \lfloor \frac {k^{4/3}} {4} \rfloor$  state matrix --
the condition concerning the shortest distance between cells of the same
color in the same row of $M$ is now satisfied by Lemma \ref{observations} (iii) and  the condition concerning the number of colors is satisfied, since we assume that
$|Q_i| \leq \frac{k^{4/3}}{4}$. 

By Lemma \ref{combinatorial-lemma} we know that there is a good path in $M$. This means that Lemma \ref{lemma2} will 
  be proved once we  show:

\begin{lemma}\label{translating}
 If there exists a good path in $M$, then there exists a word $w \not \in L_k$ such that $w$ is accepted by $\cal A$.
\end{lemma}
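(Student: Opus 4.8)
The plan is to convert a good path in $M$ into an accepting run of $\cal A$ on a word that is not in $L_k$, thereby deriving the contradiction that proves Lemma \ref{lemma2}. Recall that the $m$-th row of $M$ encodes (real cells) the shortest accepting run on $w_{i,m}$, where cell $M_{m,j}$ carries the color $q_{i,m}^{block(j-1)}$, and that a good path $c_1,\dots,c_z$ goes from the left border to the right border using right moves and color-jumps. The key observation is that each of the two elementary moves corresponds to a legal manipulation of runs. A \textbf{right move} within row $m$ just means advancing one block further along the run on $w_{i,m}$, i.e.\ reading one more block $m^N$ together with the trailing fulfillments $\overline{1},\dots,\overline{m-1},\overline{m+1},\dots,\overline{2k}$. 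A \textbf{jump} from a cell in row $m$ to a cell in row $m'$ of the same color means the two runs are in the same automaton state at those positions, so one can switch from following the run on $w_{i,m}$ to following the run on $w_{i,m'}$; because $\cal A$ has this state in common, the concatenated sequence is still a valid run.

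First I would read off from the path the sequence of rows it visits and build the word $w$ as the concatenation of the corresponding finite pieces of the $w_{i,m}$'s, cut and pasted at the jump points. Concretely, between two consecutive jumps the path stays in a single row $m$ making some right moves, and I would splice in exactly the stretch of $w_{i,m}$ that the run reads while traversing those blocks; at a jump I would glue the next piece starting from the matching state. Since $c_z$ lies on the right border, after the last piece I would append the infinite tail of $w_{i,m}$ for the final row $m$, so that $w$ is an infinite word and the resulting run is infinite. I must check that this glued sequence of states is genuinely a run: validity at each splice point is exactly the color-equality guaranteeing a shared state, and within each piece it is a fragment of an established run, so the transition relation is respected throughout.

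Next I would argue that the run is \emph{accepting}. Here the condition ``no jump leads to a row already visited'' and ``no $k$ consecutive right moves'' do the work, together with Lemma \ref{observations}. The final infinite tail belongs to a single row $m$, and since a good path reaches the right border while visiting at most $k$ cells per row, the states it passes through before settling into that tail are confined to the finitely many blocks indexed below $\lfloor k^{4/3}/4\rfloor$; the tail of the run on $w_{i,m}$ past $acc(i,m)$ stays in accepting states by Lemma \ref{observations}(i), so only finitely many non-accepting states are visited and the co-B\"uchi condition is met. I would also use the ghost/host construction to make sure that even when the path uses column indices beyond $acc(i,m)$, the corresponding state is accepting (the host is a real cell at an index $\geq acc(i,m)$, hence accepting).

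Finally I would verify $w \notin L_k$, which is where the real content lies and which I expect to be the main obstacle. The point is that, by construction, within each row the path makes fewer than $k$ right moves before jumping (no $k$ consecutive right moves, and at most $k$ cells per row), so each agent $m$ fulfills its promise strictly fewer than $\obietnic k$ times in $w$ — the fulfillments $\overline{m}$ only come from reading blocks in row $m$, and the bound on right moves caps their number. I would then match this against the representation of $L_k$ in \eqref{inf0}–\eqref{noinf}: the tail lives in a single row so $\overline{i}$ does not recur and condition \eqref{inf0} would require $\obietnic k$ fulfillments (ruled out), condition \eqref{inf1} fails for the same counting reason, and \eqref{noinf} fails because the word contains un-overlined symbols infinitely often along the tail. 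Pinning down exactly which symbol becomes the relevant $0$ or $\sharp$ under $h$, and confirming that no agent reaches $\obietnic k$ fulfillments under the precise bookkeeping of the $h_i$ interpretation, is the delicate step; the combinatorial guarantees of a good path are tailored precisely so that this counting argument goes through.
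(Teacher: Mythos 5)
Your overall strategy matches the paper's — turn the path into a word by concatenating pieces of the $w_{i,m}$'s, splice runs at jump points via the shared state, and bound each agent's number of fulfillments by the number of cells the path visits in the corresponding row — but there is a genuine error in how you close off the word, and it sinks the non-membership argument. You append ``the infinite tail of $w_{i,m}$ for the final row $m$.'' That tail is $(v_i m^N)^\omega$ (up to alignment), and $v_i$ contains $\overline{m}$; since every $\overline{m}$ in the tail is preceded by a pending promise $m$, the symbol $m$ occurs infinitely often in $h(w)$, while $h_1(w)=0$ because $\overline{i}$ never appears. So your word satisfies the first clause of the definition of $L_k$ and is \emph{in} $L_k$ — exactly what must be avoided. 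The paper instead ends the word with the bare block $\beta(z)^\omega$ (promises that are never fulfilled), which keeps every agent below $2k$ fulfillments. The price is that acceptance is no longer free: one must argue that from a state past $acc(i,m)$ the automaton can read $m^\omega$ forever inside $\alpha$, which the paper gets from the pigeonhole observation in Lemma \ref{observations}(ii) (some accepting state repeats inside a block $m^N$, so one can loop on it). Your proposal never faces this step because your (incorrect) tail lets you ride the original accepting run.

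A second, related problem is your treatment of ghost cells. You claim the host of a ghost ``is a real cell at an index $\geq acc(i,m)$, hence accepting,'' but hosts are by definition real cells, i.e.\ at column index $\leq acc(i,m)$, and every real cell carries a color $q_{i,m}^{block(j-1)}$ with $j-1 < acc(i,m)$, hence \emph{non-accepting}. More importantly, the color of a ghost is recycled from its host and does not equal the state the actual run would be in after reading that many blocks, so you cannot simply follow the good path through ghost cells and claim the state sequence is a run. The paper handles this by truncating the good path at the first ghost cell it meets (rerouting through the host and at most $k-1$ further right moves when the ghost is entered by a jump), and only then reading off the word. Without this truncation, and without replacing your infinite tail by $\beta(z)^\omega$, the run/word correspondence and the counting argument both break.
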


\begin{proof}

Suppose $r$ is a good path in $M$ and $c$ is the first ghost cell on $r$.  Let $c'$ be the direct predecessor of $c$ on $r$. If the move from $c'$ to $c$ was a right move then define a new path $p$ as 
the prefix of $r$ ending with $c$. If the move from $c'$ to $c$ was a  jump, then suppose $c''$ is the host of $c$, and define $p$ as the following path: first take the prefix of  $r$ ending with $c'$. Then jump to $c''$ (it is possible, since the color of a ghost is the color of its host). Then make at most $k-1$ right moves to the last real cell in this row.

It is easy to see that $p$ satisfies all the conditions defining a good path, except that it does not reach the right border of $M$.

Let $p$ be a concatenation of words $p_1$,$p_2\ldots$,$p_z$, such that each move between $p_x$ and $p_{x+1}$ is a jump but there are no jumps inside any of $p_x$. This means that
each $p_x$ is contained in some row of $M$, let  $\beta(x)$ be a number of this row. This also means, since $p$ is (almost) a 
good path, that $|p_x|\leq k$  for each $x$.

Let $v_i = \overline{1},\overline{2},\dots,\overline{i-1}, \overline{i+1}, \dots$, $\overline{2k}$. Now define an infinite word $w$ as follows:

$$w = i\beta(1)^N(v_i\beta(1)^N)^{|p_1|-1}  
       (v_i \beta(2)^N)^{|p_2|-1}\ldots 
       (v_i \beta(z)^N)^{|p_z|-1}\beta(z)^\omega  $$

To see that $w\not\in L_k$ notice, that a symbol $s\in \alphabet$ occurs in  $h(w)$ only if $s=\beta(x)$ for some $x\in\{1,2\ldots z\}$ and that it occurs at most $|p_x|+1\leq k$ times in $w$. The fact that $A$ accepts $w$ follows from the construction of path $p$ and from
Lemma \ref{observations} (ii).\eop

\end{proof}

\subsection{Proof of the combinatorial lemma} \label{clemma-proof}

 Let $n=\lfloor \frac {k^{4/3}} 4 \rfloor$ and  $M$ be an $n$-painted $k \times n $ state matrix.
We split the matrix $M$ into matrices $M^0, M^1, ..., M^{\lcell
\frac{2n}{k}\rcell - 1}$,
each of them of $k$ rows and each of them (possibly except of
the last one) of $\frac  k 2$  columns,
such that $M^i$ contains columns $i\frac k 2 +1, i\frac k 2 +2 \dots,
min(i\frac k 2 + \frac k 2, n)$.
The matrices $M^0, M^1, ..., M^{\lcell \frac{2n}{k}\rcell - 2}$ will
be called \emph{multicolumns}.

We are going to build a path $w=c_1c_2\ldots c_z$ through $M$
satisfying the following:
\begin{itemize}
\item if $w$ has a jump from $c_j$ to $c_{j+1}$ then both $c_j$ and
$c_{j+1}$ belong to the same multicomumn;
\item $w$  has exactly ${\lcell \frac{2n}{k}\rcell - 1}$ jumps, one in
each multicolumn;
\item no jump on $w$ leads to a previously visited row of $M$.
\end{itemize}

Clearly, such a path will be a good path. This is since the width  of
each multicolumn is $\frac k 2$, and each sequence of consecutive
right moves  on $w$ will be contained in two adjacent multicolumns
(except of the last such sequence, which is contained in the last
multicolumn
and $M^{\lcell \frac{2n}{k}\rcell - 1}$).

Let $s = \frac {k^{1/3}}{2}$. Since $\lceil s \rceil = \lcell \frac
{2 \cdot k^{4/3}/4}{k} \rcell \geq \lcell \frac{2n}{k}\rcell$,
the number $\lceil s \rceil -1$ is not smaller than the number of jumps we want to make.

Now we concentrate on a single multicolumn $M^i$, which is a matrix
with $k$ rows and with $\frac k 2$ columns. We will call
two rows of such a multicolumn \emph{brothers}  if at least one cell
of one of those rows is of the same color as at least one cell of
another (i.e. two
rows are brothers if a path through $M^i$ can make a jump between them).

Suppose some of the rows of the multicolumn $M^i$ belong to some set
$D^i$ of  \emph{dirty} rows. The rows which are not dirty will be
called \emph{clean}.
A color will be called clean if it occurs in some of the clean rows.
 A row will be called \emph{poor} if it has less than $\lceil s \rceil$ clean
brothers. One needs to take care here -- in the following
procedure, while more rows will get dirty, more rows will also get poor:

\noindent{\bf Procedure} (Contaminate a single multicolumn($D^i$,$M^i$) )\\
\begin{tabular}{c p{10.8cm}}

\hspace*{30pt}	&{\bf while} there are clean poor rows (with respect to the current set $D^i$ of
	dirty rows) in $M^i$, select any clean poor row and all his brothers,
	and make them dirty (changing $D^i$ accordingly).
\end{tabular}\\
{\bf end of procedure}

We would like to know how many new dirty rows can be produced as a
result of  an execution of the above procedure.

Each execution of the body of the while loop makes dirty at most $\lceil s \rceil$
rows and  decreases the number of clean colors
by at least  $\frac k 2$ -- none of the colors of the selected clean
poor row remains clean after the body of the while loop is executed.
Since there are at most $n$ colors in the multicolumn (as $M$ is
$n$-colored), the body of the while loop can be executed at most $\frac n {k/2} \leq \lceil s \rceil$
times, which means that at most $\lceil s \rceil^2$ new dirty rows can be produced.

Notice that after an execution of the procedure, none of the clean rows is poor.

Now we are ready for the next step:

\noindent{\bf Procedure} (Contaminate all multicolumns)\\
\hspace*{30pt}Let $D^{\lcell \frac{2n}{k}\rcell - 1}=\emptyset$;\\
\hspace*{30pt}{\bf for} $i= \lcell \frac{2n}{k}\rcell - 2$ {\bf down to 0}\\
\hspace*{60pt}Let $D^i = D^{i+1}$;\\
\hspace*{60pt}Contaminate a single multicolumn($D^i$,$M^i$);\\
{\bf end of procedure}

We used a convention here, that a set $D^i$ of rows is identified with
the set of numbers of those rows. Thanks to that we could
write the first line of the above procedure, saying ``consider the
dirty rows of $M^{i+1}$ to be also dirty in $M^i$''.

Suppose $D^0,D^1\ldots D^{\lcell \frac{2n}{k}\rcell - 2}$ are sets of
dirty rows in
multicolumns $M^0$,$M^1$, $\ldots$,  $M^{\lcell \frac{2n}{k}\rcell - 2}$
resulting from an execution of the procedure Contaminate all
multicolumns.
Notice, that for each $0\leq i \leq  \lcell \frac{2n}{k}\rcell - 2$
the inclusion  $D^{i+1} \subseteq D^i$ holds. In other words, if a row
is clean in
$M^i$, then it is also clean in  $M^{i+1}$.

The following lemma explains why clean rows are of interest for us:

\begin{lemma}
Suppose $w=c_1c_2\ldots c_z$ is a path through the matrix consisting
of the first $i$ multicolumns of $M$  (or, in other words,
of the first $\frac{ki}{2}$ columns of $M$). Suppose (i) $w$ has
exactly one jump in each multicolumn, and each jump leads to a row which was not visited before, (ii) if there is
a jump from $c_j$ to $c_{j+1}$ then both $c_j$ and $c_{j+1}$ belong to
the same multicomumn. Suppose finally, that
(iii)  the
cell where $w$ reaches the right border of the matrix, belongs to a
clean row $r$. Then $w$ can be extended to a path through the matrix consisting of the first $i+1$ multicolumns of
$M$, in such a way that this extended path will also satisfy
conditions
(i)-(iii).
\end{lemma}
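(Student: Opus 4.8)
The claim is an inductive extension step: I have a path $w = c_1 c_2 \ldots c_z$ through the first $i$ multicolumns that satisfies conditions (i)–(iii), and I must push it one multicolumn further while preserving all three. The crucial asset handed to me is condition (iii): the path reaches the right border of the $i$-th multicolumn in a \emph{clean} row $r$ — clean with respect to $D^i$, and hence (by the nesting $D^{i+1}\subseteq D^i$) also clean with respect to $D^{i+1}$ when viewed in $M^{i+1}$. So the plan is to continue $w$ from $r$ using right moves and a single well-chosen jump inside the new multicolumn $M^{i+1}$, landing on the right border again in a clean row.

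**The construction.** Here is the approach I would take. The path currently sits at the left edge of $M^{i+1}$ in row $r$. First I would march right across $M^{i+1}$ in row $r$ until reaching a cell from which I want to jump; since each multicolumn has width $\frac k2 < k$, a full traversal of any single multicolumn uses fewer than $k$ consecutive right moves, so this portion alone never violates the good-path bound. The jump is the delicate choice. I would use the fact that after running the contamination procedure on $M^{i+1}$, \emph{no clean row is poor}: every clean row has at least $\lceil s\rceil$ clean brothers. Row $r$ is clean, so it has at least $\lceil s\rceil$ clean brothers in $M^{i+1}$. I want to jump from row $r$ to one of those clean brother rows, then make right moves to the right border of $M^{i+1}$, finishing in a clean row — thereby re-establishing condition (iii) and feeding the next induction step.

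**The main obstacle — avoiding already-visited rows.** The genuinely hard part is guaranteeing that the target row of the jump has \emph{not} been visited before, so that condition (i) (no jump leads to a previously visited row) is preserved. This is exactly where the number $\lceil s\rceil$ was engineered. The path has one jump per multicolumn and we are in multicolumn $i+1$ (out of $\lceil 2n/k\rceil - 1$), so the number of rows already visited is bounded by the number of jumps already made plus the starting row, which is at most $\lceil s\rceil$ (recall $\lceil s\rceil - 1$ bounds the total number of jumps). Since row $r$ has at least $\lceil s\rceil$ clean brothers, and at most $\lceil s\rceil$ rows have been visited so far, a pigeonhole argument forces at least one clean brother of $r$ to be previously unvisited. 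I would jump to such a row. Because that brother shares a color with row $r$ inside $M^{i+1}$, the jump is legal; because it is clean, the induction hypothesis (iii) is restored after I slide to the right border.

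**Assembling the step.** Finally I would verify the remaining conditions mechanically: condition (ii) holds because the single new jump lies entirely within $M^{i+1}$; the no-$k$-consecutive-right-moves requirement holds because the right moves before the jump and after it each stay within the width-$\frac k2$ multicolumn (one must just note the jump breaks the run of right moves, so no block of right moves spans more than one multicolumn's worth, well under $k$); and condition (i) holds by the pigeonhole choice above together with the inductive hypothesis. The one point demanding care is the exact bookkeeping of how many rows are ``already visited'' versus how many clean brothers are guaranteed — making sure the off-by-one in $\lceil s\rceil$ versus $\lceil s\rceil - 1$ lands on the right side of the inequality so that a fresh clean brother always exists.
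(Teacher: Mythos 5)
Your proposal is correct and follows essentially the same argument as the paper: row $r$ remains clean in the new multicolumn, the absence of clean poor rows after contamination gives it at least $\lceil s\rceil$ clean brothers there, and a counting argument produces an unvisited clean brother to jump to before sliding right to the border. The off-by-one you flag resolves exactly as the paper does it: $r$ itself is among the visited rows but is not its own brother, and the number of jumps made so far is at most $\lceil s\rceil-1$, so at most $\lceil s\rceil-1$ of the $\lceil s\rceil$ clean brothers can already have been visited.
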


\begin{proof}
 The only thing that needs to be proved is that one can
jump, in multicolumn $M^i$, from row $r$ to some clean row which was
not visited before.
Since, by assumption, $r$ was clean in $M^{i-1}$, it is also clean in
$M^i$. Since there are no  clean poor rows in $M^i$, we know
that $r$ has at least $\lceil s \rceil$ clean brothers. At most $i$ of them were
visited so far by the path, where of course $i \leq \lceil s \rceil-1 $.\eop
\end{proof}

Now, starting from an empty path and a clean row in $M^0$ and using
the above lemma $\lceil \frac {2n} {k}\rceil - 2$ times we can construct a path $w$ as described
in the beginning of this subsection and finish the proof of Lemma \ref{combinatorial-lemma}. The only
lemma we still need for that is:

\begin{lemma}
$|D^0|< k$. In other words, there are clean rows in $M^0$.
\end{lemma}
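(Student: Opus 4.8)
The plan is to bound $|D^0|$ by a simple accumulation argument: count, multicolumn by multicolumn, how many \emph{new} rows each contamination can turn dirty, sum these contributions over all multicolumns, and check that the total stays below $k$.

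First I would record the monotonicity of the whole process. The outer procedure builds $D^i$ from $D^{i+1}$ purely by calling Contaminate a single multicolumn$(D^i,M^i)$, which only ever adds rows to $D^i$; no row is ever cleaned. Starting from $D^{\lceil 2n/k\rceil-1}=\emptyset$, this gives
\[
|D^0| \;\le\; \sum_{i=0}^{\lceil 2n/k\rceil-2}\big(\text{number of rows newly made dirty in }M^i\big).
\]
The analysis already carried out shows that a single call to Contaminate a single multicolumn produces at most $\lceil s\rceil^{2}$ new dirty rows: its while loop runs at most $\lceil s\rceil$ times and each iteration dirties at most $\lceil s\rceil$ rows. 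Since the sum has $\lceil 2n/k\rceil-1$ terms, I obtain
\[
|D^0| \;\le\; \big(\lceil 2n/k\rceil-1\big)\,\lceil s\rceil^{2}.
\]

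Next I would use the inequality $\lceil 2n/k\rceil\le\lceil s\rceil$ noted earlier in this subsection, turning the bound into
\[
|D^0| \;\le\; \big(\lceil s\rceil-1\big)\,\lceil s\rceil^{2}\;<\;\lceil s\rceil^{3},
\]
where the final step is strict because $\lceil s\rceil\ge1$. It then remains to show $\lceil s\rceil^{3}\le k$. Writing $t=k^{1/3}$, we have $s=t/2$, so $\lceil s\rceil\le s+1=\tfrac{t}{2}+1$; and since $k\ge64$ forces $t\ge4\ge2$, we get $\tfrac{t}{2}+1\le t$, hence $\lceil s\rceil\le t$ and $\lceil s\rceil^{3}\le t^{3}=k$. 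Combining the two displays yields $|D^0|<k$, as required, and therefore $M^0$ contains at least one clean row.

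The argument is mostly bookkeeping, so the one point that needs care — and the place where a naive count would go wrong — is the per-multicolumn charging in the first display: one must count only the rows freshly dirtied in each $M^i$, rather than the whole inherited set $D^{i+1}$, for otherwise the contributions would compound multiplicatively across multicolumns instead of adding, and the bound would blow up. The monotonicity observation, together with the already-proved $\lceil s\rceil^{2}$ bound per call, is exactly what licenses the additive count. The hypothesis $k\ge64$ enters only at the very end, to guarantee $t/2+1\le t$.
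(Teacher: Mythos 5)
Your proof is correct and follows essentially the same route as the paper's: both bound $|D^0|$ additively by at most $\lceil s\rceil^{2}$ newly dirtied rows per multicolumn over the $\lceil 2n/k\rceil-1\le\lceil s\rceil-1$ multicolumns, and then check that the resulting product is below $k$. The only difference is cosmetic: you close with $(\lceil s\rceil-1)\lceil s\rceil^{2}<\lceil s\rceil^{3}\le k$ via $\lceil s\rceil\le k^{1/3}$, while the paper bounds the same quantity by $s(s+1)^{2}<k$; both estimates are valid for the stated range of $k$.
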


\begin{proof}
Let $l=\lceil s \rceil - 2$ be the index of the last multicolumn. The number of dirty rows in  $D^{l-i}$ can be bounded by $(i+1)  \cdot \lceil s \rceil^2$ because of observations about defined procedures. For $i=l$, we have $(\lceil s \rceil - 1) \cdot \lceil s \rceil^2$,  what is not greater then $s(s+1)^2=\frac {k^{1/3}} {2}(\frac {k^{1/3}} {2} + 1)^2$ which is, finally, less then $k$, because $k\geq 8$.\eop
\end{proof}

\end{document}